\documentclass[11pt,a4]{article}

\bibliographystyle{plain}

\sloppy

\usepackage{pstricks, pst-coil, pst-node, pst-tree, multido}
\usepackage{enumerate}
\usepackage{verbatim}
\usepackage{latexsym} 
\usepackage{theorem}
\usepackage{graphics}
\usepackage{graphicx}
\usepackage{amsmath}
\usepackage{amssymb}
\usepackage{clrscode} 
\usepackage{longtable}
\newtheorem{defeng}{Definition}[section]
\newtheorem{theorem}[defeng]{Theorem}

\newtheorem{lemma}[defeng]{Lemma}

\newtheorem{corollary}[defeng]{Corollary}
{\theorembodyfont{\rmfamily} }
{\theorembodyfont{\rmfamily} }
{\theorembodyfont{\rmfamily} }
{\theoremstyle{break}\theorembodyfont{\rmfamily} }
{\theoremstyle{break}\theorembodyfont{\rmfamily} }

\newcounter{claim}

\newenvironment{proof}[1][]%
 {\noindent {\setcounter{claim}{0}\sc proof ---
   }{#1}{}}{\hfill$\Box$\vspace{2ex}} 


%
{\refstepcounter{claim}\vspace{1ex}\noindent{(\it\arabic{claim}){#1}{}}\it}{\vspace{1ex}}

	{\noindent {}{#1}{}}{ This proves~(\arabic{claim}).\vspace{1ex}}

\newcommand{\sm}{\setminus} 

\DeclareMathOperator{\Mark}{Mark}
\DeclareMathOperator{\Move}{Move}

\DeclareMathOperator{\Explore}{Explore}

\usepackage{ifpdf}

\ifpdf
\DeclareGraphicsRule{*}{mps}{*}{}
\fi

\begin{document}

\title{Detecting 2-joins faster}

\author{Pierre Charbit\thanks{Universit\'e Paris 7, LIAFA, Case 7014,
    75205 Paris Cedex 13, France.  E-mail: pierre.charbit@liafa.jussieu.fr.}~,~Michel
  Habib\thanks{Universit\'e Paris 7, LIAFA, Project team Inria :
    Gang, Case 7014, 75205 Paris Cedex 13, France.  E-mail:
    michel.habib@liafa.jussieu.fr.}~, \\Nicolas Trotignon\thanks{CNRS,
    LIP ENS de Lyon, INRIA, Universit\'e de Lyon, 15 parvis Ren\'e
    Descartes	BP 7000	69342 Lyon cedex 07	France. Email: nicolas.trotignon@ens-lyon.fr.} ~and Kristina
  Vu\v{s}kovi\'c\thanks{School of Computing, University of Leeds,
    Leeds LS2 9JT, UK and Faculty of Computer Science, Union
    University, Knez Mihailova 6/VI, 11000 Belgrade, Serbia. E-mail:
    k.vuskovic@leeds.ac.uk.  Partially supported by EPSRC grant
    EP/H021426/1 and Serbian Ministry of Education and Science grants
    174033 and III44006.\newline The first and third authors are
    supported by \emph{Agence Nationale de la Recherche} under
    reference \textsc{anr 10 jcjc 0204 01}. \newline The four authors
    are also supported by PHC Pavle Savi\'c grant, jointly awarded by
    EGIDE, an agency of the French Minist\`ere des Affaires
    \'etrang\`eres et europ\'eennes, and Serbian Ministry for Science
    and Technological Development.}}

\date{September 28, 2012}

\maketitle

\begin{abstract}
  2-joins are edge cutsets that naturally appear in the decomposition
  of several classes of graphs closed under taking induced subgraphs,
  such as balanced bipartite graphs, even-hole-free graphs, perfect
  graphs and claw-free graphs.  Their detection is needed in several
  algorithms, and is the slowest step for some of them.  The classical
  method to detect a 2-join takes $O(n^3m)$ time where $n$ is the
  number of vertices of the input graph and $m$ the number of its
  edges.  To detect \emph{non-path} 2-joins (special kinds of 2-joins
  that are needed in all of the known algorithms that use 2-joins), the
  fastest known method takes time $O(n^4m)$.  Here, we give an
  $O(n^2m)$-time algorithm for both of these problems.  A consequence is a
  speed up of several known algorithms.
\end{abstract}


\section{Introduction}

A partition $(X_1, X_2)$ of the vertex-set of a graph $G$ is a
\emph{2-join} if for $i=1,2$, there exist disjoint non-empty $A_i, B_i
\subseteq X_i$ satisfying the following:

\begin{itemize} 
\item
  every vertex of $A_1$ is adjacent to every vertex of $A_2$,  every
  vertex of $B_1$ is adjacent to every vertex of $B_2$, and
  there are no other edges between $X_1$ and $X_2$;
\item 
  for $i=1,2$, $|X_i| \geq 3$. 
\end{itemize}

Sets $X_1$ and $X_2$ are the two \emph{sides} of the 2-join.  
We say that $(X_1, X_2,
A_1, B_1, A_2, B_2)$ is a \emph{split} of a 2-join $(X_1, X_2)$. 
For $i= 1, 2$, we will denote by $C_i$ the set $X_i \setminus (A_i
\cup B_i)$.

The 2-join was first introduced by Cornu\'ejols and Cunningham in
\cite{cornuejols.cunningham:2join} in the context of studying
composition operations that preserve perfection.  It is a
generalization of an edge cutset known as a 1-join (or \emph{join} or
\emph{split decomposition}) introduced by Cunningham and Edmonds in
\cite{edcu:dec}.  A partition $(X_1,X_2)$ of the vertex set
of a graph $G$ is a \emph{1-join} if for $i=1,2$, there exists
non-empty $A_i \subseteq X_i$ satisfying the following:

\begin{itemize} 
\item
  every vertex of $A_1$ is adjacent to every vertex of $A_2$, and
  there are no other edges between $X_1$ and $X_2$;
\item 
  for $i=1,2$, $|X_i| \geq 2$. 
\end{itemize}

2-Joins ended up playing a key role in structural characterizations of
several complex classes of graphs closed under taking induced
subgraphs, and construction of polynomial time recognition and
optimization algorithms associated with these classes.  2-Joins are
used in decomposition theorems for balanced bipartite graphs that
correspond to balanced $0,1$ matrices \cite{ccr} as well as 
balanced $0,\pm 1$
matrices \cite{cckv-bal}, even-hole-free graphs \cite{cckv-ehf1, dsv},
odd-hole-free graphs \cite{ccv-ohf}, square-free Berge graphs
\cite{conforti.c.v:square}, Berge graphs in general
\cite{spgt,chudnovsky:trigraphs,nicolas:bsp} and claw-free graphs
\cite{chudnovsky.seymour:ClawFree}.  The decomposition theorem in
\cite{spgt} famously proved the Strong Perfect Graph Conjecture.

Decomposition based polynomial time recognition algorithms, that use
2-joins, are constructed for balanced $0, \pm 1$ matrices
\cite{cckv-bal}, even-hole-free graphs \cite{cckv-ehf2, dsv} and Berge
graphs with no balanced skew partition \cite{nicolas:bsp}.  2-Joins
are also used in \cite{nicolas.kristina:2-join} for solving the
following combinatorial optimization problems in polynomial time:
finding a maximum weighted clique, a maximum weighted stable set and
an optimal coloring for Berge graphs with no balanced skew partition
and no homogeneous pairs, and finding a maximum weighted stable set
for even-hole-free graphs with no star cutset.

Detecting a 2-join in a graph obviously reduces to detecting a 2-join
in a connected graph, so input graphs of our algorithms may be assumed
to be connected.  We denote with $n$ the number of vertices of an
input graph $G$, and with $m$ the number of edges in $G$.  In
\cite{cornuejols.cunningham:2join} an ${O}(n^3m)$ algorithm for
finding a 2-join in a graph $G$ (or detecting that the graph does not
have one) is given.  The algorithm is based on a set of forcing rules
that for a given pair of edges $a_1a_2$ and $b_1b_2$ decides, in time
${O} (n^2)$, whether there exists, a 2-join with split
$(X_1,X_2,A_1,B_1,A_2,B_2)$ such that for $i=1,2$, $a_i \in A_i$ and
$b_i \in B_i$, and finds it if it does.  In Section~\ref{sec:force},
we describe a new method to achieve the same goal slightly faster, in
time $O(n+m)$.

It is observed in \cite{cornuejols.cunningham:2join} that since for
any spanning tree $T$ of $G$, any 2-join $(X_1,X_2)$ must contain an
edge of $T$ that is between $X_1$ and $X_2$, then to find a 2-join
in a graph, one needs to check ${O}(nm)$ pairs of edges $a_1a_2$ and
$b_1b_2$, giving the total running time of ${O}(n^3m)$ for finding a
2-join.  In Section~\ref{sec:univ}, we show that actually one only
needs to check ${O}(n^2)$ pairs of edges, reducing the running time of
finding a 2-join to ${O}(n^2m)$.

All the 2-joins whose detection is needed for the algorithms mentioned
above in fact have an additional crucial property: they are
\emph{non-path} 2-joins.  A 2-join is said to be a \emph{path 2-join} if it
has a split $(X_1, X_2, A_1, B_1, A_2, B_2)$ such that for some $i\in
\{1, 2\}$, $G[X_i]$ is a path with an end in $A_i$, an end in $B_i$
and interior in $C_i$.  In this case $X_i$ is said to be a
\emph{path-side} of this 2-join.  A \emph{non-path 2-join} is a 2-join
that is not a path 2-join.  In \cite{cckv-ehf2} it is observed that by
applying the 2-join detection algorithm ${O}(n)$ times one can find a
non-path 2-join if there is one.  In Section~\ref{sec:nonpath} we show
that in fact a constant number of calls to the algorithm for 2-join is
needed, so that non-path 2-joins can also be detected in
${O}(n^2m)$-time.

In inductive arguments or algorithms that use cutsets, i.e.\
decomposition theorems, one needs the concept of the {\em blocks of
  decomposition}, by which a graph is decomposed into ``simpler''
graphs. Blocks of decomposition of a graph $G$ with respect to a
2-join with split $(X_1,X_2,A_1,B_1,A_2,B_2)$ are graphs $G_1$ and
$G_2$ usually constructed as follows: $G_1$ is obtained from $G$ by
replacing $X_2$ by a {\em marker path} $P_2$ that is a chordless path
from a vertex $a_2$ which is adjacent to all of $A_1$ to a vertex
$b_2$ which is adjacent to all of $B_1$, and whose interior vertices
are all of degree two in $G_1$. Block $G_2$ is obtained similarly by
replacing $X_1$ by a marker path $P_1$. In all of the above mentioned
papers, blocks of decomposition for 2-joins are constructed this way,
where marker paths are of some fixed small length. For example in
\cite{cornuejols.cunningham:2join} they are of length 1, and in the
other papers they are of length at most 6. It is now clear why
non-path 2-joins are a more useful concept when using 2-joins in
algorithms.

In \cite{cornuejols.cunningham:2join} it is claimed that at most $n$
applications of the 2-join detection algorithm are needed to decompose
a graph into irreducible factors, i.e.\ graphs that have no 2-join.
This is true, as shown in \cite{cckv-ehf2}, but in
\cite{cornuejols.cunningham:2join} it is based on a wrong observation
that the 2-join detection algorithm given in
\cite{cornuejols.cunningham:2join} always finds an extreme 2-join,
i.e.\ one whose both blocks of decomposition are irreducible. First of
all it is not true that every graph that has a 2-join, has an extreme
2-join. For example graph $G$ in Figure~\ref{fige2j} has exactly two
2-joins, one is represented with bold lines, and the other is
equivalent to it.  Both of the blocks of decomposition are isomorphic
to graph $H$ (where dotted lines represent paths of arbitrary length,
possibly of length 0), and $H$ has a 2-join whose edges are
represented with bold lines.  So $G$ does not have an extreme 2-join.
Even if a graph had an extreme 2-join the algorithm in
\cite{cornuejols.cunningham:2join} would not necessarily find it.  
On the other hand, 1-joins have a much nicer tree-like structure so that 
there exist fast ($O(m)$ time) algorithms to compute a representation 
of the whole family of 1-joins of a given graph, and in particular 
yield extremal ones. See for example 
\cite{ms:split,dahl:split,cmr:split}.
 \begin{figure}[h!]
  \begin{center}
    \psset{xunit=6.0mm,yunit=6.0mm,radius=0.1,labelsep=0.1}

    \def\uputnode(#1,#2)#3#4{\Cnode(#1,#2){#3}\nput{ 90}{#3}{\small$#4$}}
    \def\dputnode(#1,#2)#3#4{\Cnode(#1,#2){#3}\nput{270}{#3}{\small$#4$}}
    \def\lputnode(#1,#2)#3#4{\Cnode(#1,#2){#3}\nput{180}{#3}{\small$#4$}}
    \def\rputnode(#1,#2)#3#4{\Cnode(#1,#2){#3}\nput{0}{#3}{\small$#4$}}
    \def\sputnode(#1,#2)#3#4{\Cnode(#1,#2){#3}\nput{45}{#3}{\small$#4$}}
    \begin{pspicture}(15,9)

      \uputnode(5,1){b2'}{}
      \uputnode(3,1){b1'}{}
 
      \uputnode(2,2){w}{}
      \lputnode(6,2){z}{}
 
      \sputnode(1,3){w1}{}
      \rputnode(3,3){b1}{}
      \dputnode(5,3){b2}{}
      \dputnode(7,3){z1}{}
      \dputnode(1,5){x1}{}

      \dputnode(3,5){a1'}{}
      \uputnode(5,5){a2'}{}
      \uputnode(7,5){y1}{}

      \lputnode(2,6){x}{}
      \sputnode(6,6){y}{}

      \rputnode(3,7){a1}{}
      \dputnode(5,7){a2}{}

      \rput(4,8){\rnode{g}{$G$}}

      \dputnode(11,1){B1'}{}
      \uputnode(10,2){W}{}
      \sputnode(9,3){W1}{}
      \rputnode(11,3){B1}{}
      \dputnode(13,3){B2}{}
      \dputnode(9,5){X1}{}

      \dputnode(11,5){A1'}{}
      \uputnode(15,5){Y1}{}
      \lputnode(10,6){X}{}
      \sputnode(14,6){Y}{}
      \rputnode(11,7){A1}{}
      \dputnode(13,7){A2}{}

      \rput(12,8){\rnode{h}{$H$}}

    \ncline{x}{x1}
      \ncline{x1}{w1}
      \ncline{w}{w1}
      \ncline{w}{b1'}
      \ncline{w1}{a1'}
      \ncline{x1}{b1}
      \ncline{a1'}{b1}

      \ncline{y}{a2}
      \ncline{y}{y1}
      \ncline{y1}{z1}
      \ncline{z}{z1}
      \ncline{z}{b2'}
      \ncline{b2}{y1}
      \ncline{a2'}{z1}
      \ncline{a2'}{b2}

      \ncline{b1'}{w}
      \ncline{x}{a1}  

      \ncline{X}{X1}
      \ncline{X}{A1}
      \ncline{W}{W1}
      \ncline{W}{B1'}
      \ncline{A2}{A1}
      \ncline{A2}{A1'}
      \ncline{B2}{B1}
      \ncline{B2}{B1'}

      \psset{linewidth=0.8mm}

      \ncline{a2}{a1}
      \ncline{a2'}{a1}
      \ncline{a2}{a1'}
      \ncline{a2'}{a1'}
      \ncline{b2}{b1}
      \ncline{b2'}{b1}
      \ncline{b2}{b1'}
      \ncline{b1'}{b2'}

      \ncline{Y}{Y1}
      \ncline{A1'}{B1}
      \ncline{W1}{A1'}
      \ncline{X1}{W1}
      \ncline{X1}{B1}

      \psset{linestyle=dotted}
      \ncline{A2}{Y}
      \ncline{B2}{Y1}      

    \end{pspicture}
\end{center}
\caption{A graph $G$ with no extreme 2-join\label{fige2j}}
\end{figure}

For the optimization algorithms in~\cite{nicolas.kristina:2-join}, it
is in fact essential that these extreme non-path 2-joins are used,
which is potentially a problem since as shown above, a graph with a
2-join may fail to have an extreme 2-join.  Fortunately, graphs studied
in~\cite{nicolas.kristina:2-join} have no star cutset, where a
\emph{star cutset} is any set $S\subseteq V(G)$ such that $G \setminus
S$ is disconnected and for some $x \in S$, $x$ is adjacent to all
vertices of $S \setminus \{ x \}$.  And as shown in
\cite{nicolas.kristina:2-join}, if a graph with no star star cutset
has a non-path 2-join, then it has an extreme non-path 2-join.  In
Section~\ref{sec:extreme} we show how to find an extreme non-path
2-join in time $O(n^3m)$ in graphs that have no star cutset.  It is
in fact interesting that for \emph{all} known algorithms that use
2-join detection (see the list in Section~\ref{sec:consequence}), one
actually needs to look for a non-path 2-join in graphs that do not have
star cutsets.  This remark could perhaps lead to further speed ups.

In Section~\ref{sec:consequence} we survey the consequences of our
work for the running time of several algorithms that use 2-joins.


\section{Finding a 2-join compatible with a 4-tuple}
\label{sec:force}

A 4-tuple $(a_1, a_2, b_1, b_2)$ of vertices from a graph $G=(V,E)$ is
\emph{proper} if:
\begin{itemize}
\item $a_1$, $b_1$, $a_2$, $b_2$ are pairwise distinct;
\item $a_1a_2, b_1b_2 \in E$;
\item $a_1b_2, b_1a_2 \notin E$.
\end{itemize}

It is \emph{compatible} with a 2-join $(X_1, X_2)$ of $G$ if $a_1, b_1
\in X_1$ and $a_2, b_2 \in X_2$.  Note that when $(X_1, X_2)$ has
split $(X_1, X_2, A_1, B_1, A_2, B_2)$ then for any $a_1 \in A_1$,
$a_2 \in A_2$, $b_1 \in B_1$ and $b_2 \in B_2$, the 4-tuple $(a_1,a_2,
b_1, b_2)$ is proper and compatible with $(X_1, X_2)$; and any proper
4-tuple $(a_1,a_2,b_1,b_2)$ that is compatible with a 2-join
$(X_1,X_2)$ is such that for $i=1,2$, either $a_i \in A_i$ and $b_i
\in B_i$, or $a_i \in B_i$ and $b_i \in A_i$.

In~\cite{cornuejols.cunningham:2join}, Cornu\'ejols and Cunningham
describe a set of forcing rules that output a 2-join of a graph
compatible with a given 4-tuple, if there exists one.  The method is
implemented in time $O(n^2)$.  Here, we propose something slightly
faster for the same task.

\begin{lemma}\label{l:forcing}
  Let $G$ be a graph and $Z = (a_1, a_2, b_1, b_2)$ a proper 4-tuple
  of $G$.  There is an $O(n+m)$-time algorithm that given a set $S_0
  \subseteq V(G)$ of size at least $3$ such that $\{a_1, b_1, a_2,
  b_2\} \cap S_0 = \{a_1, b_1\}$ (resp.\ $\{a_1, b_1, a_2, b_2\} \cap
  S_0 = \{a_2, b_2\}$) outputs a 2-join with a split $(X_1, X_2, A_1,
  B_1, A_2, B_2)$, compatible with $Z$ and such that $a_1 \in A_1$,
  $a_2 \in A_2$, $b_1 \in B_1$, $b_2 \in B_2$ and $S_0 \subseteq X_1$
  (resp.\ $S_0 \subseteq X_2$), if there exists such a 2-join.

  Moreover, $X_1$ (resp.\ $X_2$) is minimal with respect to this
  property, meaning that any 2-join with split $(X'_1, X'_2, A'_1,
  B'_1, A'_2, B'_2)$ satisfying these properties is such that $X_1
  \subseteq X'_1$ (resp.\ $X_2 \subseteq X'_2$).
\end{lemma}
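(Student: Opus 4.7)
My plan is a BFS-style forcing propagation; I handle the case $\{a_1,b_1\}\subseteq S_0$, the other case being symmetric. The starting point is the observation that once $X_1$ is fixed (with $X_2 = V(G)\setminus X_1$), the six parts of any split compatible with $Z$ and containing $S_0$ in $X_1$ are \emph{forced} to equal $A_1 = N(a_2)\cap X_1$, $B_1 = N(b_2)\cap X_1$, $C_1 = X_1\setminus(A_1\cup B_1)$, and symmetrically on side $2$. The problem thus reduces to finding the inclusion-minimal $X_1\supseteq S_0$ whose induced partition is a valid split.

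I initialise $X_1 := S_0$ together with a queue seeded with $S_0$, and process each popped vertex $v$ in turn. I first classify $v$ from its adjacency to $\{a_2,b_2\}$: the cases $v\in\{a_2,b_2\}$ or $v$ adjacent to both (which would put $v$ in $A_1\cap B_1$) cause failure, otherwise $v$ is assigned to $A_1$, $B_1$, or $C_1$. I then scan $N(v)$ once. For each neighbour $w$ currently in $X_2$, I verify that $w$'s adjacency to $\{a_1,b_1\}$ matches the role required by $v$'s type: $v\in A_1$ requires $w\in A_2$ (equivalently $w\sim a_1$ and $w\not\sim b_1$); $v\in B_1$ requires $w\in B_2$; and $v\in C_1$ is itself contradicted by the existence of the edge $vw$. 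Any $w$ failing this test, as well as any $w\in X_2\cap N(a_1)\cap N(b_1)$ encountered along the way (which would otherwise lie in $A_2\cap B_2$), is appended to $X_1$ and to the queue.

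When the queue empties, the only requirement of a split not yet enforced is the complete-bipartite condition: every $w\in A_2$ is adjacent to every vertex of $A_1$ (and symmetrically between $B_1$ and $B_2$). To enforce this without paying an $\Omega(|A_1|\cdot|A_2|)$ cost, I maintain throughout the propagation a counter $c(w)$ of the currently known $A_1$-neighbours of each $w\in N(a_1)\cap X_2$ (and an analogous counter on the $b$-side); the counter is updated cheaply during the ordinary neighbourhood scans, incrementing $c(w)$ by one for each edge $vw$ discovered with $v$ newly in $A_1$. When the queue first empties I make one linear pass over $N(a_1)\cap X_2$ and $N(b_1)\cap X_2$, forcing into $X_1$ every $w$ with $c(w)<|A_1|$, respectively $c(w)<|B_1|$; the propagation resumes with the newly added vertices and the process continues until no further forcing occurs. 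Since each vertex enters $X_1$ at most once and each edge is examined $O(1)$ times across all phases, the total work is $O(n+m)$.

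Correctness and minimality are by a short induction on the order in which vertices are appended to $X_1$: a vertex $w$ is added precisely because keeping $w$ in $X_2$ would contradict a 2-join constraint involving an already-committed vertex $v\in X_1$, and by the inductive hypothesis $v$ lies in $X'_1$ for every valid split satisfying the initial constraints, so $w\in X'_1$ as well. After stabilisation, a final $O(n+m)$-time check verifies the non-emptiness of $A_1,B_1,A_2,B_2$ and the size conditions $|X_i|\geq 3$, returning the split if they hold and otherwise certifying that no compatible split exists. The main obstacle to keeping the running time within $O(n+m)$ is precisely the complete-bipartite check between $A_1$ and $A_2$ (and between $B_1$ and $B_2$), which naively costs $\Omega(|A_1|\cdot|A_2|)$; the counter trick together with the fact that vertices only ever move from $X_2$ to $X_1$ is what makes the amortised cost linear.
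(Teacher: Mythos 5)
Your forcing rules, the classification of $X_1$-vertices by their adjacency to $\{a_2,b_2\}$, and the minimality argument (every vertex is moved only when keeping it in $X_2$ would violate a 2-join constraint involving an already-committed vertex) all match the paper's procedure, and that part of the argument is sound. The genuine gap is in the running-time analysis of your lazy, counter-based enforcement of the complete-bipartite condition. Your accounting charges only \emph{edge} examinations, but each phase-end sweep touches \emph{every} surviving candidate in $N(a_1)\cap X_2$ and $N(b_1)\cap X_2$, at cost $\Theta(\deg a_1+\deg b_1)$ per sweep, whether or not any new edge is examined. The number of productive sweeps is not $O(1)$: a candidate $w$ that survives sweep $t$ with $c(w)=|A_1|$ can become deficient at sweep $t+1$ if $|A_1|$ grows during phase $t+1$ while $w$ misses the new $A_1$-vertex, so one can force exactly one new vertex per phase over many phases. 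One can arrange $\Theta(\sqrt m)$ such phases in a graph with $m=\Theta(n)$ while $\deg a_1=\Theta(n)$ and the $a$-side candidates are never pruned (give each of them degree $2$, adjacent to both vertices of a permanently two-element $A_1$, and drive the phases from the $b$-side); the total sweep cost is then $\Theta(n\sqrt n)$, which is not $O(n+m)$. So the algorithm as described does not meet the claimed bound.

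The paper avoids this by pruning eagerly rather than at phase boundaries: it maintains the candidate set $A$ for $A_2$ (initially $N(a_1)\cap T$) and, each time a vertex $x$ classified in $A_1$ is explored, moves the whole symmetric difference $A\,\Delta\,(N(x)\cap T)$ into $X_1$ in one step, so that afterwards $A\subseteq N(x)$. The scan of $A$ at the next such exploration therefore costs at most $\deg(x)$, and since each vertex of $S$ is explored exactly once, the total cost telescopes to $O(n+m)$. You could repair your version either by adopting this eager intersection or by bucketing candidates by their counter value so that a sweep only visits candidates whose counter lags the current $|A_1|$; as written, the deferred full sweeps break the linear bound.
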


\begin{proof}
\begin{table}\label{algoM}
\begin{description}
\item{\bf Input:} $S_0$ a  set of vertices of a graph $G$ such that:
  $|S_0| \geq 3$ and four vertices $a_1$, $b_1$, $a_2$, $b_2$ pairwise
  distinct with: $a_{1}, b_{1} \in S_0$, $a_{2}, b_{2} \notin S_0$, 
  $a_{1}a_{2}, b_{1}b_{2}\in E$, and $a_{1}b_{2}, b_{1}a_{2}\notin E$.

\item{\bf Initialization:}

$S \leftarrow S_0$; $T \leftarrow V(G) \sm S_0$; 
$A \leftarrow N(a_{1})\cap T$;  $B \leftarrow N(b_{1})\cap T$;

\textbf{If} $A \cap B \neq \emptyset$  \textbf{then} $\Move(A \cap B)$;

Vertices $a_{1}, b_{1}, a_{2}, b_{2}$ are left unmarked. For the other vertices of $G$:

\rule{1em}{0ex}$\Mark(x) \leftarrow \alpha.\beta$ for every vertex $x \in  N(a_{2})\cap N(b_{2}) $;

\rule{1em}{0ex}$\Mark(x) \leftarrow \alpha$ for every vertex $x \in  N(a_{2}) \sm N(b_{2}) $;

\rule{1em}{0ex}$\Mark(x) \leftarrow \beta$ for every vertex $x \in  N(b_{2}) \sm N(a_{2}) $;

\rule{1em}{0ex}Every other vertex of $G$ is marked by $\varepsilon$;

\rule{1em}{0ex}{\it Note that a vertex can be unmarked, 
or marked by $\varepsilon, \alpha, \beta~or~\alpha.\beta$. }


\item{\bf Main loop:}

\textbf{While} there exists a vertex $x \in S$ marked

\textbf{Do} $\Explore(x);$ unmark vertex $x$;

\item{\bf Function Explore(x):}

Case on the value of  $\Mark(x)$:

\rule{1em}{0ex}\parbox{15cm}{\textbf{If} 
$\Mark(x)=\alpha.\beta$ \textbf{then} STOP;\\
\rule{1em}{0ex}{OUTPUT : {\it   No 2-join $(S,T)$ with $S_{0}\subset S$ 
is compatible with the 4-tuple.}}}

\rule{1em}{0ex}\textbf{If} $\Mark(x)=\alpha$  \textbf{then}  
$\Move( A \Delta (N(x)\cap T))$;

\rule{1em}{0ex}\textbf{If} $\Mark(x)=\beta$ \textbf{then}  
$\Move( B \Delta (N(x)\cap T))$;

\rule{1em}{0ex}\textbf{If} $\Mark(x)=\varepsilon$ \textbf{then} 
$\Move(N(x)\cap T)$;

\item{\bf Function Move(Y):}

{\it This function just moves a subset $Y \subset T$ from $T$ to $S$. }

$S \leftarrow S \cup Y$;  $ A\leftarrow A \sm Y$;   $ B\leftarrow B \sm Y$;  $ T\leftarrow T \sm Y$;
\end{description}

\caption{Procedure used in Lemma~\ref{l:forcing}\label{algoM}}
\end{table}
We use the procedure described in Table~\ref{algoM}.  The following
properties are easily checked to be invariant during all the
execution of the procedure (meaning that they are satisfied after each
call to Explore):

\begin{itemize}
\item $S$ and $T$ form a partition of $V(G)$, $S_0\subseteq  S$ and
  $a_2, b_2 \in T$.
\item
All unmarked vertices belonging to   $S\cap N(a_{2})$ have the
same neighborhood in $T$, namely $A$. 

\item All  unmarked  vertices  belonging to   $S\cap N(b_{2})$ have
  the same neighborhood in $T$, namely $B$.

\item All unmarked  vertices belonging to $S$  which do not see $a_{2}$ 
nor $b_{2}$
  have the same  neighborhood  in $T$, namely $\emptyset$.

\item For every 2-join $(X_1,X_2)$ such that $S_0 \subseteq X_1$ and
  $a_2,b_2 \in X_2$, we have that $S \subseteq X_1$ and $X_2 \subseteq
  T$.
\end{itemize}

Since all moves from $T$ to $S$ are necessary (this comes from the last
item), if we find a vertex marked $\alpha.\beta$ in $S$ then no 
desired 2-join
exists.  If the process does not stop because of a vertex marked
$\alpha.\beta$ then all vertices of $S$ have been explored and
therefore are unmarked.  So, if $|T|\geq 3$, at the end, $(S, T)$, is
a 2-join compatible with $Z$: $X_1 = S$, $X_2 = T$, $A_1 = S \cap
N(a_2)$, $B_1 = S \cap N(b_2)$, $A_2 = T \cap N(a_1)$, $B_2 = T \cap
N(b_1)$.  Since all moves from $T$ to $S$ are necessary, the 2-join is
minimal as claimed (this also implies that if $|T| \leq 2$, then
no desired 2-join exists).

\noindent\textbf{Complexity Issues:} The neighborhood of a vertex in
$S$ is considered at most once.  So, globally, the process requires
$O(n+m)$ time.
\end{proof}

\begin{corollary}
  \label{usualtrick}
  There is an $O(n+m)$ algorithm whose input is a graph $G$ together
  with a proper 4-tuple $Z$ of vertices and whose output is a 2-join of
  $G$ compatible with $Z$ if such a 2-join exists.
\end{corollary}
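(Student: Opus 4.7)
The plan is to derive the corollary from Lemma~\ref{l:forcing} by making only two calls to it, so as to preserve the $O(n+m)$ bound.

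First I would dispose of the trivial case $|V(G)| \le 5$: since any 2-join satisfies $|X_1|+|X_2| \ge 6$, in that case we can immediately report that no compatible 2-join exists. Assuming $|V(G)| \ge 6$, I would pick an arbitrary vertex $v \in V(G) \setminus \{a_1,b_1,a_2,b_2\}$. The crucial observation is that in any hypothetical 2-join $(X_1,X_2)$ compatible with $Z$, this vertex $v$ must belong to exactly one of the two sides, and we do not know a priori which.

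This dichotomy suggests the algorithm: invoke Lemma~\ref{l:forcing} once with $S_0=\{a_1,b_1,v\}$ (which would witness $v\in X_1$), and a second time with $S_0=\{a_2,b_2,v\}$ (which would witness $v\in X_2$). Both are legal inputs to the lemma because $|S_0|=3$ and $S_0 \cap \{a_1,b_1,a_2,b_2\}$ is exactly one of the two admissible pairs. If either call returns a 2-join, output it; otherwise report that no compatible 2-join exists. Correctness is immediate: given any 2-join $(X_1,X_2)$ compatible with $Z$, vertex $v$ lies on one of the two sides, and the corresponding call of Lemma~\ref{l:forcing} is then guaranteed by its specification to produce a witness (possibly a different 2-join than $(X_1,X_2)$, but still a valid one compatible with $Z$).

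The complexity is two invocations of an $O(n+m)$ procedure plus constant overhead, giving $O(n+m)$ overall. I do not foresee any real obstacle; the only point worth verifying carefully is that the symmetric ``(resp.)'' clauses of Lemma~\ref{l:forcing} genuinely allow a single subroutine to treat both orientations, so that no additional bookkeeping inflates the running time.
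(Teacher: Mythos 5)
Your proposal is correct and follows exactly the same route as the paper: rule out $|V(G)|\le 5$, pick an arbitrary fifth vertex, and make two calls to Lemma~\ref{l:forcing} with $S_0=\{a_1,b_1,u\}$ and $S_0=\{a_2,b_2,u\}$, justified by the fact that $u$ must lie on one of the two sides of any compatible 2-join. Nothing is missing.
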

\begin{proof}
  We suppose $|V(G)|\geq 6$ for otherwise no 2-join exists.  Suppose
  $Z= (a_1, a_2, b_1, b_2)$.  Take any vertex $u$ of $G \setminus \{
  a_1,a_2,b_1,b_2\}$ and apply Lemma~\ref{l:forcing} to $S_0=\{
  a_1, b_1, u\}$ and then to $S_0=\{ a_2, b_2, u \}$.  Since for any
  2-join $(X_{1},X_{2})$ compatible with $Z$, either $u \in X_1$ or $u
  \in X_2$, this method detects a 2-join compatible with $Z$ if there
  is one.
\end{proof}

\section{Computing a small universal set}
\label{sec:univ}

A set $U$ of proper 4-tuples of vertices of a graph $G$ is
\emph{universal} if for every 2-join $(X_1, X_2)$ of $G$, at least one
4-tuple from $U$ is compatible with $(X_1, X_2)$.  Note that for all
graphs, there exists a universal set: the set of all proper 4-tuples
of vertices.  Note that if a graph has no 2-join then any set of
proper 4-tuple of vertices, including the empty set, is universal.

To detect 2-joins, it suffices to consider a universal set $U$, and to
apply Corollary~\ref{usualtrick} for all 4-tuples $Z = (a_1, a_2, b_1,
b_2)$ in $U$.  This gives a naive $O(n^4m)$ time algorithm (by
considering the universal set of all proper 4-tuples) and an
${O}(nm^2)$ (that was originally $O(n^3m)$) algorithm for finding a
2-join as described in~\cite{cornuejols.cunningham:2join}, by
considering a universal set of size $O(nm)$ as explained in the
introduction.  We now show how to compute a universal set $U$ of
proper 4-tuples of $G$, of size $O(n^2)$, for any connected graph $G$,
resulting in an $O(n^2m)$ algorithm for finding a $2$-join.

We first review some well known facts about breadth first search trees.
When $T$ is a tree and $u, v$ are vertices of $T$, we denote by $uTv$
the unique path of $T$ between $u$ and $v$.  For a graph $G$ and
vertices $u$ and $v$ of $G$ we denote by $d_G(u, v)$ the distance
between $u$ and $v$ in $G$.  A \emph{BFS-tree} of a connected graph
$G$ is any couple $(T, r)$ where $r$ is a vertex of $G$ and $T$ is a
spanning tree of $G$ such that for all vertices $v\in V(G)$ we have
$d_T(r, v) = d_G(r, v)$.  We say that $r$ is the \emph{root} of $T$.
It is a well known result that for any connected graph $G$ and any
vertex $r$, there exists a BFS-tree $(T, r)$.

Once a BFS tree $(T, r)$ of a graph $G$ is given, we use the following
standard terminology.  The \emph{level} of a vertex $v$ is $l(v) =
d_G(r, v) = d_T(r, v)$.  For any vertex $v\neq r$, there exists a
unique vertex $u$ such that $uv \in E(T)$ and $l(u) = l(v)-1$.  We say
that $u$ is the \emph{parent} of $v$ and $v$ is a \emph{child} of $u$.
We denote by $p(v)$ the parent of $v$. The vertices of $rTv$ are the
\emph{ancestors} of $v$.  If $v$ is a vertex of $rTu$ then $u$ is a
\emph{descendant} of $v$.

A well known linear-time algorithm, named \emph{BFS}, computes a
BFS-tree $(T, r)$ of any connected graph $G$ for any vertex $r$.  The
algorithm provides as an output the tree together with a $O(n)$-time
routine that allows to compute the parent and all the ancestors of any
non-root vertex, and the children and all the descendants of any
vertex.  For the implementation, see for instance~\cite{gibbons:agt}.

Consider the following method for computing a set $U$ of 4-tuples.

\begin{enumerate}
\item Start with $U = \emptyset$.
\item Choose a vertex $r$ and run BFS to obtain a BFS-tree $(T, r)$.
\item\label{i:pairs} Add to $U$ all proper 4-tuples $(a_1, a_2, b_1,
  b_2)$ such that $a_1a_2, b_1b_2 \in E(T)$.
\item\label{i:anscestor} For all pairs of vertices $u$ and $v$ of $G$
  such that $l(u) \geq 2$ and $l(v) \geq 1$ do the following:
  \\
  Compute the set $D_u$ of all descendant of $u$ (note that $u\in
  D_u$).
  \\
  If there exists an edge $a_1v$ with $a_1\in D_u$, pick any such edge
  and add $(a_1, v, p(u), p(p(u)))$ and $(p(p(u)), p(u),  v, a_1)$ to
  $U$ (when they are proper).
\end{enumerate}

\begin{lemma}\label{lu}
  A connected graph $G$ admits a universal set of 
proper 4-tuples of $G$,
of size $O(n^2)$.  Such a set can be found in time $O(n^3)$.
\end{lemma}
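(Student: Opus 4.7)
The plan is to establish three properties: $|U| = O(n^2)$, the construction time is $O(n^3)$, and $U$ is universal. The first two are straightforward. Step~3 adds at most one 4-tuple per ordered pair of tree edges (so $O(n^2)$ in total), and step~4 adds at most two tuples per pair $(u,v)$ (again $O(n^2)$), so $|U| = O(n^2)$. For the time, BFS is $O(n+m)$; step~3 enumerates tree-edge pairs in $O(n^2)$; in step~4, for each $u$ the set $D_u$ is computed once in $O(n)$ by a tree traversal, and then each of the $n$ choices of $v$ is processed in $O(n)$ by scanning $N(v)\cap D_u$. This yields $O(n^3)$ overall.

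The real content is the universality statement. Fix a 2-join $(X_1,X_2)$ with split $(X_1,X_2,A_1,B_1,A_2,B_2)$ and classify the tree edges crossing the cut. If $T$ contains a cut edge of each type $A_1A_2$ and $B_1B_2$, combining any two such edges into a 4-tuple yields a proper tuple (the 2-join forbids $A_1B_2$ and $B_1A_2$ edges) compatible with $(X_1,X_2)$, captured by step~3. Otherwise, assume all cut tree edges are of type $A_1A_2$; the $B_1B_2$ case is symmetric. Since $B_1$ is complete to $B_2$, pick any $b_1b_2\in E$ with $b_i\in B_i$; this is a non-tree edge. I first observe $r\notin B_1\cup B_2$: if $r\in B_1$, then every $b_2'\in B_2$ is a level-$1$ child of $r$ in $T$, creating a $B_1B_2$ cut tree edge and contradicting our case, and similarly for $r\in B_2$. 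Thus $r$ lies in $X_1$ or in $X_2$; I describe the subcase $r\in X_1$ (the other follows by an $X_1\leftrightarrow X_2$ symmetry). Consider the BFS tree path $r=v_0,v_1,\dots,v_k=b_2$ and let $v_iv_{i+1}$ be its last cut edge, so $v_i\in A_1$ and $v_{i+1}\in A_2$. Since $b_2\in B_2$ is distinct from $v_{i+1}\in A_2$, the index satisfies $i+1<k$ and the vertex $u:=v_{i+2}$ exists; set $v:=b_1$. Then $l(u)=i+2\ge 2$, $l(v)\ge 1$ because $b_1\ne r$, and $b_2\in D_u$ with $b_1b_2\in E$, so step~4 does pick some $w\in D_u\cap N(b_1)$ and adds in particular the tuple $(p(p(u)),p(u),v,w)=(v_i,v_{i+1},b_1,w)$.

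The main obstacle — and the crux of the proof — is that step~4 is free to pick any $w$ it likes, so I must ensure that every possible choice yields a compatible tuple. The key claim is $D_u\subseteq X_2$. Let $l^*=\min_{a\in A_1}l(a)$. Any $a'\in A_1$ is adjacent to all of $A_2$, and some $A_2$ vertex lies at level $\le l^*+1$ (reached in one step from a minimum-level $A_1$), so $l(a')\le l^*+2$. Meanwhile $v_{i+1}\in A_2$ has parent $v_i\in A_1$, giving $l(v_i)\ge l^*$ and $l(u)=i+2\ge l^*+2$. Hence $l(u)\ge l(a')$ for every $a'\in A_1$, and since strict descendants in a BFS tree have strictly greater level, $A_1\cap D_u=\emptyset$. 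But every cut tree edge is of type $A_1A_2$, so any cut tree edge lying inside the subtree rooted at $u$ would place an $A_1$ vertex in $D_u$; there is none, and the whole subtree stays on $u$'s side, i.e., $D_u\subseteq X_2$. Therefore $w\in X_2\cap N(b_1)=B_2$, and the four entries $v_i,v_{i+1},b_1,w$ of the added tuple belong to the pairwise disjoint sets $A_1,A_2,B_1,B_2$. The tuple is proper (the non-edges $v_iw$ and $v_{i+1}b_1$ are forbidden $A_1\times B_2$ and $A_2\times B_1$ pairs in any 2-join) and compatible, completing the proof.
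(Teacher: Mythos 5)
Your proof is correct and follows essentially the same route as the paper's: after disposing of the case where $T$ crosses the cut through both an $A_1A_2$ and a $B_1B_2$ edge, you locate the vertex $u$ two steps past the last tree crossing on a root-to-cut path and show $D_u$ lies entirely on one side via a BFS-level argument, so that the arbitrarily chosen edge of step~4 is forced to be a valid crossing edge and the added tuple is proper and compatible. The only differences are cosmetic: you relabel so that all crossing tree edges are of type $A_1A_2$ (the paper assumes none are, and walks toward $A_1$ rather than toward $B_2$), and you bound levels via $l^*=\min_{a\in A_1} l(a)$ where the paper uses adjacency to the specific vertex $p(p(u))$.
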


\begin{proof}
  We use the method above.  It obviously computes a set $U$, of size
  $O(n^2)$, made of proper 4-tuples of $G$.  The complexity of this
  computation is dominated by step (iv). In this step for ${O}(n^2)$
  pairs of vertices $u$ and $v$, we first compute $D_u$, which can be
  done in time ${O}(n)$ (since we already have the tree $T$), and then
  we check whether $v$ is adjacent to a vertex of $D_u$, which again
  can be done in time ${O}(n)$. So the total complexity is ${O}(n^3)$.
  It remains to prove that $U$ is universal.  Let $(X_1, X_2, A_1,
  B_1, A_2, B_2)$ be a split of a 2-join of $G$.

  If $T$ contains an edge $a_1a_2$ between $A_1$ and $A_2$, and an
  edge $b_1b_2$ between $B_1$ and $B_2$, then, in step~\ref{i:pairs},
  the proper 4-tuple $(a_1, a_2, b_1, b_2)$ is compatible with $(X_1,
  X_2)$ and added to $U$.  So, from here on, up to a relabeling, we
  assume that $T$ contains no edge between $A_1$ and $A_2$.

  Suppose first $r\in X_2$.  Pick any vertex $a$ in $A_1$.  Since $G$
  is connected, $a\in V(T)$.  So, the ancestors of $a$ form a shortest
  path $P$ from $r$ to $a$.  Path $P$ must have an edge $b_1b_2$ where
  $b_1 \in B_1$ and $b_2 \in B_2$.  Note that $P$ is chordless, and
hence  
  $b_1$ is the unique vertex of $P$ in $B_1$, and $b_2$ the unique
  vertex of $P$ in $B_2$.  Let $u$ be the vertex of $P$ such that $u,
  b_1, b_2$ are consecutive along $P$.  Note that possibly, $u=a$.
  So, $b_1 = p(u)$ and $b_2 = p(p(u))$.

  We claim that $D_u$ is included in $X_1$.  Indeed, because of $b_2$,
  every vertex $x$ in $B_1$ satisfies $l(x) \leq l(b_2) +1$.  And any
  descendant $y$ of $u$ satisfies $l(y) \geq l(u) = l(b_2) + 2$.  So, no
  descendant of $u$ is in $B_1$.  Since $T$ contains no edge between
  $A_1$ and $A_2$, no descendant of $u$ can be in $X_2$.  

  Let $v$ be any vertex of $A_2$.  Note that since no edge between
  $A_1$ and $A_2$ is in $T$, $l(v) \geq 1$.  At some point in
  Step~\ref{i:anscestor}, the algorithm considers $u$ and $v$.  Since
  $a\in D_u$, there exists an edge between $D_u$ and $v$, and any such
  edge $a_1v$ must be between $A_1$ and $A_2$ because $D_u \subseteq
  X_1$.  So, $(a_1, v, b_1, b_2) = (a_1, v, p(u), p(p(u)))$ is proper,
  compatible with $(X_1, X_2)$ and added to $U$.

  When $r\in X_1$, we find similarly that a proper 4-tuple 
$(p(p(u)),p(u),v,a_{1})$ is added to $U$.
\end{proof}

\begin{theorem}
  There is an $O(n^2m)$-time algorithm that outputs a 2-join of an input
  graph, or  certifies that no such 2-join exists. 
\end{theorem}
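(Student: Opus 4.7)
The plan is to combine the universal set construction of Lemma~\ref{lu} with the per-4-tuple detection routine of Corollary~\ref{usualtrick}. Since detecting a 2-join in $G$ reduces to doing so in each of its connected components (and computing the components takes $O(n+m)$ time), I may assume that the input graph $G$ is connected, so that in particular $m \geq n-1$.

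First, I would apply Lemma~\ref{lu} to build a universal set $U$ of proper 4-tuples of $G$ with $|U| = O(n^2)$, at a cost of $O(n^3)$ time. Then I would iterate through $U$: for each $Z \in U$, invoke Corollary~\ref{usualtrick} on $(G,Z)$, which in $O(n+m)$ time either returns a 2-join of $G$ compatible with $Z$ or certifies that no such 2-join exists. As soon as one of these calls returns a 2-join, the algorithm outputs it and halts; if the loop completes without success, it reports that $G$ has no 2-join.

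Correctness is immediate from the two ingredients: by the universality of $U$, every 2-join of $G$ is compatible with at least one $Z \in U$, and by Corollary~\ref{usualtrick} the corresponding call will find it. For the running time, the preprocessing contributes $O(n^3)$ and the main loop contributes $|U| \cdot O(n+m) = O(n^2(n+m)) = O(n^3 + n^2 m)$. Since $G$ is connected we have $n^3 = O(n^2 m)$, so the total running time collapses to $O(n^2 m)$, as required. There is no genuine obstacle at this stage: the theorem is essentially a packaging of the two preceding sections, whose technical content (the $O(n+m)$ forcing procedure and the $O(n^2)$-size universal set) is exactly what is needed to hit the $O(n^2 m)$ bound.
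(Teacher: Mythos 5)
Your proposal is correct and follows exactly the paper's own argument: compute the $O(n^2)$-size universal set via Lemma~\ref{lu} in $O(n^3)$ time, then apply Corollary~\ref{usualtrick} to each 4-tuple, with the connectedness reduction absorbing the $O(n^3)$ term into $O(n^2m)$. Nothing is missing.
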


\begin{proof}
  By Lemma~\ref{lu}, compute an universal set $U$ of $O(n^2)$ proper
  4-tuples in time $O(n^3)$.  Apply Corollary~\ref{usualtrick} to
  each 4-tuple in $U$.  This leads to an  $O(n^2m)$-time algorithm.
  In case of failure, $U$ is a certificate.
\end{proof}

This algorithm is quite brute force and in the worst case, many 
computations are repeated many times. In fact, we do not know any 
construction of instances for which the worst case is actually achieved. 
So, a faster implementation might exist.


\section{Detecting non-path 2-joins}
\label{sec:nonpath}
The purpose of this section is to prove the following theorem.

\begin{theorem}
  There is an $O(n^2m)$-time algorithm that outputs a non-path 2-join
  of an input graph, or certifies that no non-path 2-join exists.
\end{theorem}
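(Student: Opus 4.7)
The plan is to treat the 2-join detection algorithm of the previous section as a black box and invoke it only a constant number of times on carefully chosen inputs, thereby preserving the $O(n^2m)$ running time.

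First I would run the 2-join detection once. If it reports that $G$ has no 2-join, then in particular $G$ has no non-path 2-join, and I return that certificate. If the 2-join it returns happens to be non-path, I return it and stop.

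The substantive case is when the only 2-join produced is a path 2-join $(X_1,X_2)$; by symmetry assume $G[X_1]$ is a chordless path $a_1 c_1 c_2 \cdots c_k b_1$ with $A_1=\{a_1\}$, $B_1=\{b_1\}$, and interior vertices $c_i$ of degree two in $G$. The plan is to prove a structural lemma asserting that if $G$ has a non-path 2-join, then one can be exposed by a constant number of additional invocations of the forcing procedure of Lemma~\ref{l:forcing}, each run with a seed set $S_0$ augmented by a few well-chosen vertices read off from $(X_1,X_2)$. The point of such augmentations is to force the minimal 2-join returned by Lemma~\ref{l:forcing} to have $|A_i|\geq 2$, $|B_i|\geq 2$, or a $C_i$-vertex off the known path, any of which forbids the returned side from being a path. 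Since a non-path 2-join must fail to be a path in one of constantly many concrete ways, a fixed list of probes should collectively succeed whenever a non-path 2-join exists.

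The main obstacle is to prove this structural lemma, i.e.\ that constantly many probes suffice. The principal lever is that interior vertices $c_i$ of the returned path 2-join have degree two in $G$, which severely constrains how any other 2-join may interact with $X_1$: if a hypothetical non-path 2-join $(Y_1,Y_2)$ splits the path, it must do so at a unique edge $c_ic_{i+1}$, and the adjacencies forced by 2-joins on $a_1$ and $b_1$ pin down which vertices of $X_2$ must lie in the corresponding $A$-, $B$-, or $C$-parts of $(Y_1,Y_2)$. A case analysis on how $(Y_1,Y_2)$ is positioned relative to $(X_1,X_2)$ -- nested inside $X_2$, sharing the attachments $a_1,b_1$, or cutting across the path at a single edge -- should reduce to a finite catalogue of configurations, each directly testable by one call to the 2-join detection algorithm. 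Combined with the initial run, the total running time is a constant times $O(n^2m)$, hence $O(n^2m)$.
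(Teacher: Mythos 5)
There is a genuine gap: your entire argument rests on an unproved structural lemma asserting that a single path 2-join $(X_1,X_2)$ returned by one run of the detection algorithm carries enough information to locate \emph{any} non-path 2-join of $G$ with a constant number of further probes. This is not true as stated. A non-path 2-join $(Y_1,Y_2)$ need not interact with $(X_1,X_2)$ at all: for instance $Y_1$ may be entirely contained in $X_2$, compatible with a completely different proper 4-tuple among the $\Theta(n^2)$ candidates of the universal set, and sharing no vertices with the path $G[X_1]$. Your own case catalogue concedes this (``nested inside $X_2$''), but that case is not ``directly testable by one call'' --- deciding whether a non-path 2-join lives inside $X_2$ is essentially the original problem again, and recursing on it costs a factor of $n$, not a constant. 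Moreover, the forcing procedure of Lemma~\ref{l:forcing} must be given the proper 4-tuple compatible with the 2-join it is supposed to find; augmenting $S_0$ with vertices ``read off from $(X_1,X_2)$'' does not help if the 4-tuple of the hypothetical non-path 2-join is unrelated to the one that produced $(X_1,X_2)$.

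The paper avoids this by not trying to reduce the number of 4-tuples examined: it iterates over \emph{all} $O(n^2)$ 4-tuples $Z$ of the universal set from Lemma~\ref{lu}, and for each $Z$ spends only $O(n+m)$ time deciding whether a non-path 2-join compatible with $Z$ exists. The key device is the notion of a \emph{bad path} relative to $Z$ (an induced $a_i$--$b_i$ path of length at least $2$ with interior of degree $2$): if some vertex $u$ lies on no bad path, then seeding Lemma~\ref{l:forcing} with $\{a_1,b_1,u\}$ (and $\{a_2,b_2,u\}$) and invoking its minimality guarantee shows the returned 2-join cannot be a path 2-join unless none exists; and if every vertex lies on a bad path, $G$ is a union of such paths and a counting argument finishes. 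If you want to pursue your ``constant number of black-box calls'' idea, you would need to prove the structural lemma you invoke, and the nested configuration above is a concrete obstacle you would have to overcome.
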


\begin{proof}
  The idea is similar to the one of the previous section. First we
  compute a universal set $U$ of size $O(n^2)$ by Lemma~\ref{lu}.
  Then, for every 4-tuple $Z=(a_1, a_2, b_1, b_2)$ in $U$, we either
  find a non-path $2$-join compatible with $Z$ or certify that none
  exists.

  Therefore let us fix $Z$ and define a {\em bad path} to be, for
  $i=1,2$, an induced path of $G$ of length at least $2$, from $a_i$
  to $b_i$, avoiding $a_{3-i}$ and $b_{3-i}$, whose interior vertices
  are all of degree $2$.  Note now that a $2$-join is a non-path
  $2$-join if and only if none of the two sides is a bad path.  We
  check now whether there exists a vertex $u$ that is not in any bad
  path of the input graph. This is  easy to do in linear time by
  computing the degrees and searching the graph.

  Suppose first that we find such a vertex $u$.  Then we apply
  Lemma~\ref{l:forcing} to $S_0 = \{a_1, b_1, u\}$ and to $S_0 =
  \{a_2, b_2, u\}$.  We claim that this will detect a non-path 2-join
  compatible with $Z$ if there is one.  Indeed, suppose there is one
  and suppose up to symmetry that $u, a_1, b_1$ are in the same side.
  When we apply Lemma~\ref{l:forcing} to $\{a_1, b_1, u\}$, some
  2-join $(X_1, X_2)$ must be detected.  If it is a path 2-join, then
  the path-side must be $X_2$ because $u$ cannot be in a path-side
  since it is not in any bad path.  But since $X_1$ is minimal in the
  sense of Lemma~\ref{l:forcing}, we see that any 2-join compatible
  with $Z$ and with $a_1, b_1, u$ in the same side must in fact be
  $(X_1, X_2)$, and hence a path 2-join, contradicting our assumption
  (indeed, since $X_2$ is a path, no vertex can be moved from $X_2$ to
  $X_1$).  So $(X_1, X_2)$ is non-path and we output it.  This
  completes the proof when there exists a vertex that is not in any
  bad path.
  
  Now we may assume that every vertex of $G\sm Z$ is in a bad path.
  This means that $G$ is the union of paths from $a_i$ to $b_i$, $i=1,
  2$, all of length at least 2, with interior vertices of degree 2,
  plus the two edges $a_1a_2$ and $b_1b_2$.  Then it is
  straightforward to decide directly whether a non-path 2-join
  compatible with $Z$ exists by just counting: let $k$ be the number
  of bad paths; if $k \leq 2$, or $k=3$ and all of the vertices of $Z$
  are in bad paths, then no non-path 2-join exists; otherwise a non-path
  2-join exists (and is easy to find by putting two bad paths with
  same endvertices on one side and all the other vertices on the
  other side).
\end{proof}


\section{Finding minimally-sided 2-joins}
\label{sec:extreme}

A non-path 2-join $(X_1,X_2)$ of a graph $G$ is {\em minimally-sided}
if for some $i \in \{ 1,2 \}$, the following holds: for every non-path
2-join $(X_1',X_2')$ of $G$, neither $X_1' \subsetneq X_i$ nor $X_2'
\subsetneq X_i$ holds. In this case $X_i$ is said to be a {\em minimal
  side} of this minimally-sided non-path 2-join. Note that any graph
that has a non-path 2-join, also has a minimally-sided non-path
2-join.  A non-path 2-join $(X_1,X_2)$ of a graph $G$ is an {\em
  extreme 2-join} if for some $i \in \{ 1,2 \}$, the block of
decomposition $G_i$ has no non-path 2-join.  Note that this definition
could be sensitive to the precise definition of what we call a block
of decomposition, but we do not need the definition here.

Recall that graphs that have a non-path 2-join do not necessarily have an
extreme 2-join, as shown in Figure \ref{fige2j}.
For the combinatorial optimization algorithms in
\cite{nicolas.kristina:2-join}, extreme 2-joins are needed.  The
graphs in \cite{nicolas.kristina:2-join} have no star
cutsets, and it is shown in \cite{nicolas.kristina:2-join} that in graphs
with no star cutsets
being a minimally-sided 2-join implies being an extreme 2-join.
So, for the needs in \cite{nicolas.kristina:2-join}, it is enough to
detect minimally-sided non-path 2-joins in graphs with no star
cutsets.  

Note that Lemma~\ref{l:forcing} ensures that the 2-joins that we
detect satisfy a minimality condition, so one might think that the
algorithm in Section~\ref{sec:nonpath} detects a minimally-sided
non-path 2-join.  As far as we can see, this is not the case.  Indeed,
suppose for instance that luckily, the first call to Lemma~\ref{l:forcing}
with $S_0 = \{a_1, b_1, u\}$ gives a non-path 2-join $(X_1, X_2)$.
Then, Lemma~\ref{l:forcing} ensures only that $X_1$ is minimal among
all 2-joins with $a_1, b_1, u$ in the same side, not among all
possible 2-joins compatible with $(a_{1},a_{2},b_{1},b_{2})$.  So, 
to detect minimally-sided 2-joins, a method is
to try all possible vertices $u$.  Below, we show that this works for
non-path 2-joins.  We use Lemma~4.2 from~\cite{nicolas.kristina:2-join}.

\begin{lemma}[\cite{nicolas.kristina:2-join}]\label{nk}
  Let $G$ be a connected graph with no star cutset, and let
  $(X_1,X_2,A_1,B_1,A_2,B_2)$ be a split of a 2-join of $G$.  If
  $(X_1,X_2)$ is a minimally-sided non-path 2-join, with $X_i$ being a
  minimal side, then $|A_i| \geq 2$ and $|B_i|\geq 2$.
\end{lemma}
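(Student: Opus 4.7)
The plan is to proceed by contradiction, using the symmetry of the statement in $A \leftrightarrow B$ and in the indices to reduce to a single case: assume $i=1$ and $A_1=\{a\}$. The strategy then has two prongs: either one constructs a non-path 2-join whose smaller side is strictly contained in $X_1$, contradicting the minimality of $X_1$, or one exhibits a star cutset of $G$, contradicting the hypothesis that $G$ has none.

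First, I would examine the natural candidate partition $(X_1\setminus\{a\},X_2\cup\{a\})$ equipped with the tentative split $A_1'=N(a)\cap(X_1\setminus\{a\})$, $A_2'=\{a\}$, $B_1'=B_1$, $B_2'=B_2$. One checks that this is a valid 2-join as soon as $a$ has no neighbor in $B_1$ (so that $A_1'$ and $B_1'$ are disjoint and no spurious edge appears between the sides), the set $A_1'$ is nonempty, and the two new sides still have size at least $3$. Next I would argue that this 2-join is non-path: the original hypotheses $|A_1|=1$ and $(X_1,X_2)$ non-path force enough structure (typically $|B_1|\ge 2$ or a branching inside $G[X_1]$) that $X_1\setminus\{a\}$ cannot be a path-side, and the size of $X_2\cup\{a\}$ together with the structure of $G[X_2]$ rules out the enlarged side being a path-side. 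Since $X_1\setminus\{a\}\subsetneq X_1$, this contradicts the minimality of $X_1$.

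When the construction above fails, the no-star-cutset hypothesis is used to derive a contradiction. Each failure mode is matched with a star cutset candidate: if $a$ has no neighbor in $X_1\setminus\{a\}$, then $\{a\}$ or $N[a]$ tends to separate part of $X_1$ from $X_2$, since $a$ is the sole carrier of the $A_1$--$A_2$ bipartite connection; if $a$ is adjacent to some vertex of $B_1$, then $\{a\}\cup A_2$ is a star centered at $a$ whose deletion, combined with the consequent separation of $C_1$ from $B_1$ inside $G[X_1]$, disconnects $G$. More uniformly, one may start from the fact that $G\setminus N[a]$ is connected to force either $B_1\not\subseteq N(a)$ or $X_1\setminus\{a\}\subseteq N(a)$, and then branch on the resulting cases.

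The main obstacle, and where I expect most of the effort to go, is the case analysis that welds these two prongs together. One must verify that whenever the first construction succeeds as a 2-join it is genuinely non-path, so that the minimality of $X_1$ \emph{with respect to non-path 2-joins} is actually contradicted, because a naive construction might yield only a path 2-join, which would be useless; and simultaneously one must ensure that every configuration in which the construction fails forces one of the star-cutset candidates $\{a\}$, $\{a\}\cup A_2$, or $N[a]$ to be a genuine cutset. The non-path hypothesis on $(X_1,X_2)$ and the size condition $|X_i|\ge 3$ are the key ingredients for ruling out the degenerate corner cases.
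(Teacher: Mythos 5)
This lemma is not proved in the paper at all: it is quoted verbatim as Lemma~4.2 of the cited manuscript \cite{nicolas.kristina:2-join}, so there is no in-paper proof to compare your attempt against. Judging your proposal on its own merits, the first prong of your plan is sound: moving the unique vertex $a$ of $A_1$ across to $X_2$, taking $A_2'=\{a\}$ and $A_1'=N(a)\cap(X_1\setminus\{a\})$, does give a 2-join when $N(a)\cap B_1=\emptyset$, $N(a)\cap(X_1\setminus\{a\})\neq\emptyset$ and $|X_1|\geq 4$; and your worry about path-sides resolves correctly ($X_2\cup\{a\}$ being a path-side forces $|A_2|=1$ and $G[X_2]$ a path, and $X_1\setminus\{a\}$ being a path-side forces $|A_1'|=1$ and hence $G[X_1]$ a path, both contradicting that $(X_1,X_2)$ is non-path).

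The genuine gap is in the second prong. None of the star-cutset candidates you name is shown to be, or in general is, a cutset, and this is exactly where the no-star-cutset hypothesis must do its work. After deleting $\{a\}\cup A_2$ (or $N[a]$, or $\{a\}$), the sets $X_1\setminus\{a\}$ and $X_2\setminus A_2$ remain joined through the complete bipartite graph between $B_1$ and $B_2$, so nothing forces disconnection: in the case $N(a)\cap(X_1\setminus\{a\})=\emptyset$ the graph $G\setminus N[a]$ is perfectly capable of being connected, and in the case $N(a)\cap B_1\neq\emptyset$ your assertion of a ``consequent separation of $C_1$ from $B_1$ inside $G[X_1]$'' does not follow from anything --- $G[X_1\setminus\{a\}]$ may well be connected. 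The same applies to the boundary case $|X_1|=3$, where the shrinking construction is unavailable and you give no alternative. Closing these cases requires more than the stars centred at $a$ that you propose; one needs auxiliary consequences of the no-star-cutset hypothesis (for instance, stars centred at vertices of $A_2$ or $B_2$, which do delete all of $A_1$ or all of $B_1$ and hence genuinely sever one side from the other, and facts such as every component of $G[X_i]$ meeting both $A_i$ and $B_i$). As written, the argument establishes the easy half of the dichotomy and leaves the half that actually uses the hypothesis unproved.
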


\begin{theorem}
  \label{th:nostarMinnonPath}
  There is an $O(n^3m)$-time algorithm that outputs a minimally-sided
  non-path 2-join of an input graph with no star cutset, or certifies
  that this graph has no non-path 2-join.
\end{theorem}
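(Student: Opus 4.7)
The plan is to enhance the algorithm of Section~\ref{sec:nonpath} with one extra loop over a ``candidate vertex'' $u$ that should belong to the minimal side, paying an additional factor $n$ in the running time. Concretely, I would first compute a universal set $U$ of size $O(n^{2})$ via Lemma~\ref{lu}. Then, for each 4-tuple $Z=(a_{1},a_{2},b_{1},b_{2})\in U$ and each vertex $u \in V(G)\setminus\{a_{1},a_{2},b_{1},b_{2}\}$, I would invoke Lemma~\ref{l:forcing} twice: once with $S_{0}=\{a_{1},b_{1},u\}$ and once with $S_{0}=\{a_{2},b_{2},u\}$. Each returned 2-join is tested in linear time for being non-path using the bad-path criterion from Section~\ref{sec:nonpath}, and every non-path output is stored. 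At the end, I output a stored 2-join whose ``forced side'' has minimum cardinality, or certify non-existence if none was stored. The complexity is $O(n^{2})\cdot O(n)\cdot O(n+m)=O(n^{3}m)$, dominating the $O(n^{3})$ cost of building~$U$.

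For correctness, suppose $G$ admits a non-path 2-join; then it admits a minimally-sided one $(X_{1}^{*},X_{2}^{*})$ with split $(X_{1}^{*},X_{2}^{*},A_{1}^{*},B_{1}^{*},A_{2}^{*},B_{2}^{*})$ and, say, $X_{1}^{*}$ a minimal side. By Lemma~\ref{lu}, $U$ contains a 4-tuple compatible with $(X_{1}^{*},X_{2}^{*})$; relabeling, we may assume $a_{1}\in A_{1}^{*}$, $b_{1}\in B_{1}^{*}$, $a_{2}\in A_{2}^{*}$, $b_{2}\in B_{2}^{*}$. By Lemma~\ref{nk}, both $|A_{1}^{*}|\geq 2$ and $|B_{1}^{*}|\geq 2$, so we can pick $u^{*}\in A_{1}^{*}\setminus\{a_{1}\}$. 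Applying Lemma~\ref{l:forcing} with $S_{0}=\{a_{1},b_{1},u^{*}\}$ yields a 2-join $(Y_{1},Y_{2})$ with $Y_{1}\subseteq X_{1}^{*}$ (by the minimality clause of Lemma~\ref{l:forcing} applied to the valid candidate~$X_{1}^{*}$). If $Y_{1}=X_{1}^{*}$ then we have produced $(X_{1}^{*},X_{2}^{*})$ itself; otherwise $(Y_{1},Y_{2})$ must be a path 2-join, else minimally-sidedness is violated.

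The first sub-case---$Y_{1}$ being the path-side---is ruled out at once: $u^{*}$ would be an interior vertex of $G[Y_{1}]$ of degree $2$ in $G$, yet $u^{*}\in A_{1}^{*}$ must be adjacent to every vertex of the non-empty set $A_{2}^{*}\subseteq Y_{2}$, a contradiction. The second sub-case, in which $Y_{2}$ is the path-side and some $w\in X_{1}^{*}\setminus Y_{1}$ lies as a degree-$2$ interior vertex of the chordless path $G[Y_{2}]$, is the main obstacle I expect. To handle it, I would also invoke Forcing with the symmetric vertex $u^{**}\in B_{1}^{*}\setminus\{b_{1}\}$ (also provided by Lemma~\ref{nk}), so as to obtain analogous degree-$2$ constraints on the complement of the corresponding $Y_{1}'$, and then exploit the no-star-cutset hypothesis---in the spirit of the proof of Lemma~\ref{nk}---to derive a contradiction from the combined structure. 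Once this sub-case is excluded, $(X_{1}^{*},X_{2}^{*})$ is itself among the stored outputs, so the final selection step returns a minimally-sided non-path 2-join.
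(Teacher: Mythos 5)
Your algorithm and the overall shape of your correctness argument are exactly those of the paper: build the universal set $U$ of Lemma~\ref{lu}, loop over all $Z\in U$ and all vertices $u$, run the procedure of Lemma~\ref{l:forcing} on $\{a_1,b_1,u\}$ and on $\{a_2,b_2,u\}$, discard the path 2-joins, and return a stored 2-join with a smallest side; the use of Lemma~\ref{nk} to supply the witness $u^*\in A_1^*\setminus\{a_1\}$ is also the paper's. Your first sub-case is essentially the paper's one-line remark that $u$ and $a_1$ are both adjacent to $a_2$, so $Y_1$ cannot be a path side, though your stated reason is off: $u^*$ could not be an \emph{interior} vertex of $G[Y_1]$ in the first place. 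The correct contradiction is that $u^*$ sees $a_2\in Y_2$, so it must lie in the part of $Y_1$ attached to $a_2$'s side of the split of $(Y_1,Y_2)$, which for a path side is the single endpoint already occupied by $a_1$.

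The genuine gap is your second sub-case, which you leave open and for which your proposed repair points in the wrong direction: no appeal to the absence of star cutsets is needed there. Suppose $Y_2$ is a path side, so $G[Y_2]$ is a chordless path from $a_2$ to $b_2$ whose interior vertices have no neighbour in $Y_1$. Since $a_1\in A_1^*\cap Y_1$ and $b_1\in B_1^*\cap Y_1$, no interior vertex of that path can lie in $A_2^*$ or $B_2^*$, whence $A_2^*=\{a_2\}$ and $B_2^*=\{b_2\}$. Now any maximal run of consecutive path vertices belonging to $Y_2\setminus X_2^*\subseteq X_1^*$ is flanked by vertices of $X_2^*$ sending an edge into $X_1^*$, i.e.\ by vertices of $A_2^*\cup B_2^*=\{a_2,b_2\}$, the two ends of the path; so such a run could only be the entire interior, forcing $|X_2^*|=2$, which is impossible. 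Hence $Y_2=X_2^*$, so $(Y_1,Y_2)=(X_1^*,X_2^*)$ is non-path, a contradiction. (In fairness, the paper compresses this step into the single sentence ``by the choice of $(X_1,X_2)$, it follows that $X'_1=X_1$''; but your write-up does not prove it either, and the route you sketch --- rerunning the forcing with $u^{**}\in B_1^*$ and exploiting the no-star-cutset hypothesis --- is not how it closes.) The complexity count and the final minimum-cardinality selection are correct and identical to the paper's.
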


\begin{proof}
  We compute a universal set $U$ of size $O(n^2)$ by Lemma~\ref{lu}.
  Then, for all 4-tuple $(a_1, a_2, b_1, b_2)$ in $U$, and for all
  vertices $u$, we apply Lemma~\ref{l:forcing} for $S_0= \{a_1, b_1,
  u\}$ and for $S_0= \{a_2, b_2, u\}$.  This will detect a
  minimally-sided non-path 2-join if there is one. Indeed, suppose
  there is one, with a split $(X_1, X_2, A_1, B_1, A_2, B_2)$ such
  that for $i=1, 2$ we have $a_i\in A_i$, $b_i\in B_i$ and
  $(a_{1},a_{2},b_{1},b_{2}) \in U$.  Then we may assume that up to
  symmetry, $X_1$ is the minimal side.  By Lemma~\ref{nk}, \ $A_1\geq
  2$.  So, for some chosen vertex $u\in A_1\sm \{a_1\}$, we have $S_0=
  \{a_1, b_1, u\} \subseteq X_1$, so Lemma \ref{l:forcing} applied to
  $S_0$ yields a 2-join $(X'_1, X'_2)$ such that $X'_1$ is minimal
  among all the 2-joins compatible with $(a_{1}, a_{2}, b_{1}, b_{2})$
  with $\{a_1, b_1, u\}$ in the same side, so a minimally-sided
  2-join. Note that since $u$ and $a_{1}$ are both adjacent to
  $a_{2}$, $X'_{1}$ cannot be a path side of the $2$-join
  $(X'_{1},X'_{2})$. By the minimality of $X_1$ we have $X_1 \subseteq
  X'_1$, and by Lemma~\ref{l:forcing} we have $X'_1 \subseteq X_1$.
  It follows that $X'_{1}=X_{1}$ and $X'_{2}=X_{2}$.
  
  Therefore by running the procedure of Lemma \ref{l:forcing} for all
  4-tuples $(a_1, a_2, b_1, b_2)$ in $U$, and all vertices $u$, and by
  throwing out every path $2$-join, we get a list of $O(n^3)$ non-path
  $2$-joins that must contain every minimally-sided non-path $2$-join
  of the graph. It suffices now to go through the list and pick a
  $2$-join with fewest number of nodes on one side. That is a
  minimally-sided non-path $2$-join.
\end{proof}

The following algorithms are potentially useful although so far, they
are not needed in any algorithm we are aware of.  We do not exclude
star cutsets anymore, at the expense of a slower running time.  A
2-join $(X_1,X_2)$ of a graph $G$ is {\em minimally-sided} if for some
$i \in \{ 1,2 \}$, the following holds: for every 2-join $(X_1',X_2')$
of $G$ , neither $X_1' \subsetneq X_i$ nor $X_2' \subsetneq X_i$
holds. Note that it is the same definition as for non-path 2-joins,
except that the condition ``non-path'' is omitted.

\begin{theorem}
  There is an $O(n^3m)$-time algorithm that outputs a minimally-sided
  2-join of an input graph or certifies that this graph has no 2-join.
\end{theorem}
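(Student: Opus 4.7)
The plan is to mirror the approach of Theorem~\ref{th:nostarMinnonPath}, simplified by removing two hypotheses: the non-path restriction on the output 2-join and the absence of a star cutset in $G$. Without the non-path requirement we no longer need to worry about the returned 2-join accidentally being a path 2-join, and without the star-cutset assumption we lose Lemma~\ref{nk}, but we do not need it: for an arbitrary 2-join $(X_1,X_2)$ the condition $|X_1| \geq 3$ from the very definition of a 2-join already supplies a third vertex $u \in X_1 \setminus \{a_1,b_1\}$ that can be used to seed the forcing procedure.

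The algorithm is as follows. Compute a universal set $U$ of size $O(n^2)$ via Lemma~\ref{lu}. For every 4-tuple $(a_1,a_2,b_1,b_2) \in U$ and every vertex $u \in V(G) \setminus \{a_1,a_2,b_1,b_2\}$, run the procedure of Lemma~\ref{l:forcing} with $S_0 = \{a_1,b_1,u\}$ and with $S_0 = \{a_2,b_2,u\}$, collecting each 2-join returned into a list $L$. If $L$ is empty, certify that $G$ has no 2-join; otherwise, output a 2-join $(X_1,X_2) \in L$ minimizing $\min(|X_1|,|X_2|)$.

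Correctness rests on two observations. First, every minimally-sided 2-join of $G$ appears in $L$: let $(X_1,X_2,A_1,B_1,A_2,B_2)$ be a split of such a 2-join with $X_1$ its minimal side; by universality of $U$, some 4-tuple $(a_1,a_2,b_1,b_2) \in U$ is compatible with $(X_1,X_2)$, and since $|X_1| \geq 3$, we may pick $u \in X_1 \setminus \{a_1,b_1\}$, so that $S_0 = \{a_1,b_1,u\} \subseteq X_1$. Lemma~\ref{l:forcing} then returns some 2-join $(X'_1,X'_2)$ with $X'_1 \subseteq X_1$ by its minimality clause, and the minimally-sided hypothesis on $(X_1,X_2)$ forbids $X'_1 \subsetneq X_1$, hence $(X'_1,X'_2) = (X_1,X_2) \in L$. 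Second, any 2-join in $L$ minimizing $\min(|X_1|,|X_2|)$ is minimally-sided: if a 2-join $(Y_1,Y_2)$ of $G$ achieves the global minimum of $\min(|Y_1|,|Y_2|)$ over all 2-joins of $G$, say with $|Y_1|$ the smaller side, then no side of any other 2-join can be strictly contained in $Y_1$ (its cardinality would drop below the global minimum), so $Y_1$ is a minimal side and $(Y_1,Y_2)$ is minimally-sided; combined with the first observation, the minimum of $\min(|X_1|,|X_2|)$ over $L$ equals the global minimum, and any realizer in $L$ is therefore minimally-sided.

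The complexity analysis is routine: Lemma~\ref{lu} costs $O(n^3)$; we make $|U|\cdot O(n) = O(n^3)$ calls to Lemma~\ref{l:forcing}, each in $O(n+m)$, for a total of $O(n^3 m)$ (using $m \geq n-1$ since $G$ is connected). No substantive obstacle is expected; the proof is essentially a simplification of the argument used for Theorem~\ref{th:nostarMinnonPath}, where the role of Lemma~\ref{nk} is played by the trivial fact that every side of a 2-join contains at least three vertices.
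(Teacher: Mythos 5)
Your proof is correct and follows essentially the same route as the paper, which simply reuses the algorithm of Theorem~\ref{th:nostarMinnonPath} minus the non-path filtering and Lemma~\ref{nk}; your observation that the condition $|X_i|\geq 3$ in the definition of a 2-join supplies the seed vertex $u$ in place of Lemma~\ref{nk}, and your cardinality argument for why the smallest-sided member of the list is minimally-sided, are exactly the details the paper leaves implicit.
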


\begin{proof}
  The same algorithm as in Theorem~\ref{th:nostarMinnonPath} works.
  Since we do not look for a non-path 2-join, we do not need to use
  Lemma~\ref{nk} and we do not throw out every path $2$-join we
  obtain.
\end{proof}

\begin{theorem}
  There is an $O(n^4m)$-time algorithm that outputs a minimally-sided
  non-path 2-join of an input graph, or certifies that this graph has
  no non-path 2-join.
\end{theorem}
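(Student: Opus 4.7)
My plan is to extend the algorithm of Theorem~\ref{th:nostarMinnonPath}, but iterating over \emph{pairs} of witnesses rather than single ones. The reason is that once the no-star-cutset assumption is dropped, Lemma~\ref{nk} is no longer available to guarantee $|A_1|\geq 2$ or $|B_1|\geq 2$ on the minimal side; in the residual case $|A_1|=|B_1|=1$ a single witness $u$ is no longer enough to force the procedure of Lemma~\ref{l:forcing} to output a non-path 2-join. Concretely, I would first compute a universal set $U$ of size $O(n^2)$ via Lemma~\ref{lu}. Then, for every 4-tuple $(a_1,a_2,b_1,b_2) \in U$ and every ordered pair of vertices $(u,v)$, I would invoke Lemma~\ref{l:forcing} with $S_0 = \{a_1,b_1,u,v\}$ and with $S_0 = \{a_2,b_2,u,v\}$, discard the outputs that are path 2-joins, and return one whose smaller side has minimum size. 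With $O(n^4)$ such calls, each running in $O(n+m)$ time, the total running time is $O(n^4 m)$.

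For correctness, let $(X_1,X_2)$ be a minimally-sided non-path 2-join with $X_1$ its minimal side and fix a compatible 4-tuple $(a_1,a_2,b_1,b_2) \in U$. If $|A_1|\geq 2$ (or symmetrically $|B_1|\geq 2$), the argument of Theorem~\ref{th:nostarMinnonPath} applies: choosing $u \in A_1\setminus\{a_1\}$ and any $v\in X_1$, the minimal side $X'_1$ returned by Lemma~\ref{l:forcing} contains both $a_1$ and $u$, which are both adjacent to $a_2$, hence $|A'_1|\geq 2$ and $(X'_1,X'_2)$ is non-path; the minimality of $X_1$ among non-path sides then forces $X'_1 = X_1$.

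The main obstacle is the case $|A_1|=|B_1|=1$. There, any 2-join side $Y$ compatible with the 4-tuple and strictly contained in $X_1$ must be a path-side, for otherwise $(Y,V(G)\setminus Y)$ would be a non-path 2-join violating the minimality of $X_1$. Since only $a_1$ and $b_1$ in $X_1$ have edges to $X_2$, such a proper path sub-side $Y$ corresponds to a chordless $a_1 b_1$-path in $G[X_1]$ whose interior vertices have no neighbours in $X_1 \setminus Y$. The key claim to establish is that because $G[X_1]$ is itself not such a chordless path (else $X_1$ would be a path-side), one can always find $u,v \in X_1$ not lying together in any proper saturated $a_1 b_1$-path; Lemma~\ref{l:forcing} applied to $\{a_1,b_1,u,v\}$ then cannot return a proper subset of $X_1$ and must therefore return $X_1$ itself. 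Proving this claim is the delicate step: it comes down to a short structural case analysis of $G[X_1]$ (whether it contains two internally-disjoint $a_1 b_1$-paths, a branching interior vertex, or is disconnected), essentially picking $u$ on one ``branch'' and $v$ on another so that no saturated path can contain both.
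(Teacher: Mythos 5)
Your proposal is correct and takes essentially the same route as the paper: the paper likewise iterates over all pairs $u,v$ and all 4-tuples of the universal set, applies Lemma~\ref{l:forcing} to $\{a_1,b_1,u,v\}$ and $\{a_2,b_2,u,v\}$, discards path 2-joins, and keeps a 2-join with fewest vertices on one side, for a total of $O(n^4m)$. The ``key claim'' you isolate is precisely the fact the paper asserts without detailed proof --- that every non-path side contains two vertices not lying on a common bad path --- so your case analysis (two disjoint bad-path interiors, a vertex off all bad paths, etc.) is, if anything, more explicit than the published argument.
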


\begin{proof}
  The algorithm from Theorem~\ref{th:nostarMinnonPath} fails, because
  it may happen that a minimally-sided non-path 2-join has its minimal
  side made of the union of two bad paths.  So, we use the same method
  as in Theorem~\ref{th:nostarMinnonPath}, but we check all pairs of
  vertices $u, v$ instead of all vertices $u$, and we apply
  Lemma~\ref{l:forcing} to $\{a_1, b_1, u, v\}$ and $\{a_2, b_2, u,
  v\}$.  Since for any non-path side $X$ of a 2-join, there exist two
  vertices $u, v \in X$ that do not lie on the same bad path, this
  method detects a non-path 2-join when there is one.  We omit further
  details since they are similar to these of
  Theorem~\ref{th:nostarMinnonPath}.
\end{proof}


\section{Consequences}
\label{sec:consequence}

The consequences of finding a non-path 2-join in ${O}(n^2m)$ time, and
finding a minimally-sided non-path 2-join for graphs with no star
cutsets in ${O}(n^3m)$ time, are the following speed-ups of existing
algorithms. Note that the speed-ups are sometimes more than by a
factor of $O(n^2)$. This is because in the algorithms mentioned below
even cruder implementations of non-path 2-join detection are used.

\begin{enumerate}
  \item Detecting balanced skew partitions in Berge graphs in time
    $O(n^5)$ instead of $O(n^9)$ \cite{nicolas:bsp}.
  \item The decomposition based recognition algorithm for Berge graphs
    in \cite{recogberge} is now ${O}(n^{15})$ instead of
    ${O}(n^{18})$, which is not so interesting since the recognition
    algorithm in the same paper that is not based on the decomposition
    method is ${O}(n^9)$. 
  \item\label{i:csc} Finding a maximum weighted clique and a maximum weighted
    stable set in time $O(n^6)$ instead of $O(n^9)$ in Berge graphs
    with no balanced skew partition and no homogeneous pairs, and
    finding an optimal coloring in time $O(n^7)$ instead of
    $O(n^{10})$ for the same class \cite{nicolas.kristina:2-join}.
  \item\label{i:ehf} Finding a maximum weighted stable set in time $O(n^6)$ instead
    of $O(n^9)$ in even-hole-free graphs with no star cutset
    \cite{nicolas.kristina:2-join}.
\end{enumerate} 

As far as we care only for these applications, it is not immediately
usable to try detecting non-path 2-joins faster than $O(n^2m)$, because
$O(n^5)$ is a bottleneck independent from 2-join detection for all the
algorithms mentioned here.  An $O(n^4)$-time algorithm for extreme (or
minimally-sided) non-path 2-joins would allow a speed-up of a factor
$n$ in the algorithms~\ref{i:csc} and \ref{i:ehf}.  We leave this as
an open question.

\end{document}